\newcommand*\LyXZeroWidthSpace{\hspace{0pt}}
\numberwithin{equation}{section}
\numberwithin{figure}{section}
\theoremstyle{remark}
\newtheorem*{rem*}{\protect\remarkname}
\theoremstyle{plain}
\newtheorem*{thm*}{\protect\theoremname}
\theoremstyle{plain}
\newtheorem{thm}{\protect\theoremname}
\theoremstyle{remark}
\newtheorem{rem}[thm]{\protect\remarkname}
\theoremstyle{plain}
\newtheorem*{lem*}{\protect\lemmaname}
\author{}
\providecommand{\lemmaname}{Lemma}
\providecommand{\remarkname}{Remark}
\providecommand{\theoremname}{Theorem}
\begin{document}
\title{Popularity Bias Alignment Estimates}
\author{Anton Lyubinin}
\date{November 24, 2025}
\begin{abstract}
We are extending Popularity Bias Memorization theorem from \cite{LGCZHFCW}
in several directions. We extend it to arbitrary degree distributions
and also prove both upper and lower estimates for the alignment with
``top-k'' singular hyperspace.
\end{abstract}

\maketitle

\section*{Introduction}

Recommender systems routinely exhibit \textbf{popularity bias}: items
that already attract many interactions receive disproportionate exposure,
which further entrenches their dominance. Recent spectral analyses
formalize one mechanism behind this effect by studying how a model's
recommendation scores align with singular vectors of the user-item
interaction matrix. In particular, Lin et al. in \cite{LGCZHFCW}
have studied popularity bias in collaborative filtering embedding-based
recommender systems, under the assumption that the item-popularity
distribution follows a power law. Following their empirical observations,
they have established Popularity Bias Memorization Effect, showing
alignment of popularity with principal singular vector and obtaining
explicit bounds on the alignment between model outputs and this direction,
and Popularity Bias Amplification Effect, finding the bound for the
relative proportion of most popular item in top-1 recommendation over
all users. They have also proposed a debiasing method based on their
findings.

However, power laws are not the only possible item popularity distribution
in real data. A power law appears as a straight line on a log-log
plot, but many other monotone decreasing distributions can masquerade
as ``straight enough'' under common plotting choices. Empirically,
popularity counts across items often deviate from pure power laws:
double-Pareto-lognormal (DPLN) families have been proposed on theoretical
and empirical grounds \cite{RJ,FWLG}, and multiple datasets in social/news
domains display shapes better captured by log-normal, power-law-with-cutoff,
or other heavy-tailed forms \cite{HL,LG,CPM}. 

\noindent\begin{minipage}[t]{1\columnwidth}%
\includegraphics[scale=0.43]{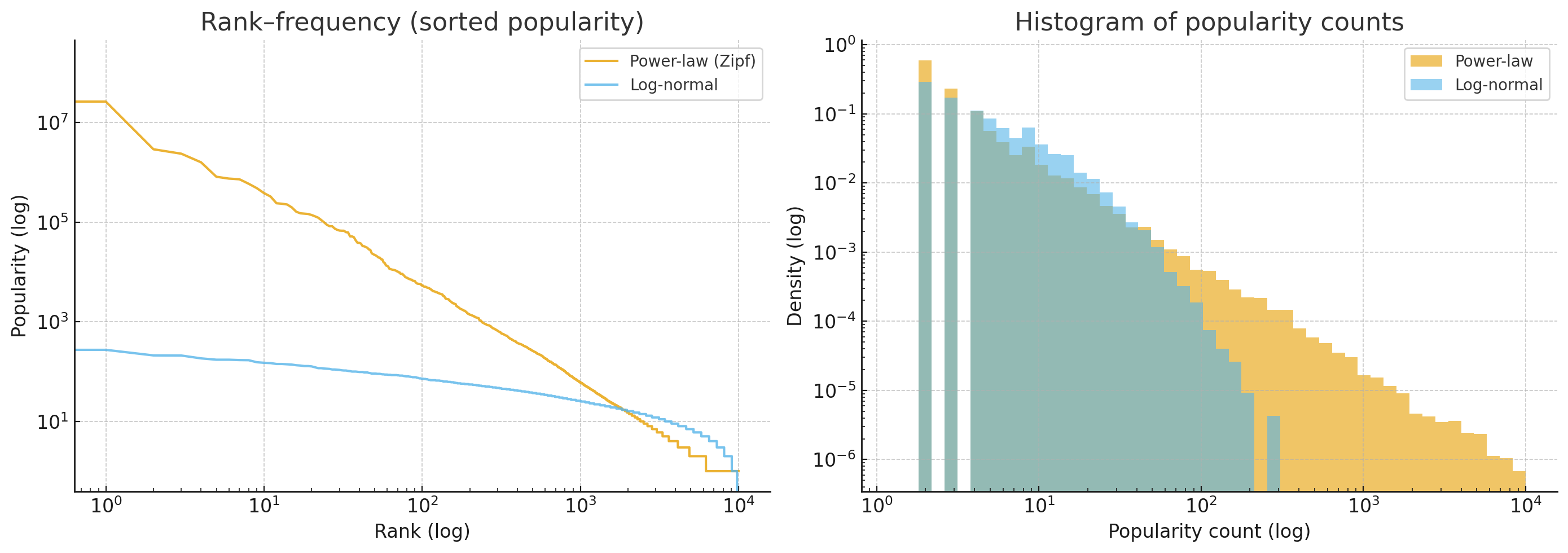}

Log-normal: $\mu=2.0$, $\sigma=1.0$, Power law: $\alpha=1.5$%
\end{minipage}

Importantly, a log-normal distribution of item counts does not induce
a power law after ranking. Moreover, goodness-of-fit is tricky: depending
on parameters, both log-normal and even exponential tails can look
deceptively straight on logarithmic axes, and statistical tests, like
the Kolmogorov--Smirnov test, have limited power to reliably distinguish
these alternatives from power laws \cite{CSN}. Consistent with this,
Clauset--Shalizi--Newman emphasize that pure power laws are rare
in practice, while log-normal and power-law-with-cutoff often fit
better \cite{CSN}.

These observations matter for spectral analyses of popularity bias.
If the data's popularity distribution deviates from a pure power law,
alignment with the principal singular vector need not be strong -
and may vary systematically with the tail shape and truncation. This
motivates a study that does not assume a specific parametric tail.

Building on the spectral perspective from \cite{LGCZHFCW}, we remove
the power-law assumption and establish distribution-agnostic results:
\begin{enumerate}
\item \textbf{$\Pi_{1}$-memorization.} We extend the Popularity Bias Memorization
theorem to an arbitrary item--popularity distribution, identifying
conditions under which recommendation scores align with the principal
singular vector. 
\item \textbf{$\Pi_{k}$-memorization.} We prove several two-sided bounds
for the alignment between popularity and the top-\textbf{$k$} singular
``hyperspace'' (i.e., the subspace spanned by the top $k$ right
singular vectors). These bounds recover the power-law case of \cite{LGCZHFCW}
as a special instance and reveal regimes where alignment weakens (e.g.,
log-normal or truncated/exponential-like tails). The very existence
of the upper bound on the alignment means that in some cases the alignment
with top-\textbf{$k$} singular hyperspace (and, in particular, with
the principal singular vector) has limits. 
\end{enumerate}
Our framework yields interpretable quantities---expressed in terms
of tail mass, effective ranks, and concentration---that can be estimated
from data and used to anticipate when models will (or will not) align
with popularity. Let us outline the content of this paper.

In \textbf{Preliminaries} section we review related notions from graph
theory and collaborative filtering models.\textbf{ Popularity bias
memorization }contains the statement of $\Pi_{1}$-memorization theorem
with complete proof, and the statements and discussion of results
on $\Pi_{k}$-memorization. The proofs of $\Pi_{k}$-memorization,
due to their technical nature and complexity, are moved to the \textbf{Appendix},
together with our discussion of the interpretation of Kumar's bound
in terms of graph topology.

\section{Preliminaries}

We will be considering collaborative filtering embedding-based recommender
systems. This section fixes notation and reviews related notions about
bipartite graphs (user--item), its linear-algebraic representations,
and basic spectral facts used throughout.

\subsubsection*{Bipartite interaction graph and matrices}
\begin{itemize}
\item Let $G=(V_{G},E_{G})$ be a bipartite graph with vertex partition
$V_{G}=\{\mathcal{U},\mathcal{I}\}$, where $\mathcal{U}$ are users
and $\mathcal{I}$ are items, $|\mathcal{U}|=n$, $|\mathcal{I}|=m$. 
\item The binary interaction matrix (biadjacency matrix) is $Y\in\{0,1\}^{n\times m}$,
with $(u,i)\in E_{G}\iff Y_{ui}=1$, represents interactions between
users and items. 
\item the block matrix $A$ is the (bipartite) adjacency matrix of $G$,
\[
A=\begin{bmatrix}0 & Y\\
Y' & 0
\end{bmatrix}\in\mathbb{R}^{(n+m)\times(n+m)},
\]
\item We use prime $(\cdot)'$ for transpose; all vectors are columns. 
\item Degrees and ``popularity.'' For an item $i\in\mathcal{I}$: 
\[
r_{i}=\sum_{u\in\mathcal{U}}Y_{ui}\quad\text{(item degree / popularity).}
\]
\item Let $\vec{e}_{k}=(1,\dots,1)'\in\mathbb{R}^{k}$ and define the popularity
vector: 
\[
\vec{r}=(r_{i})_{i=1}^{m}=Y'\vec{e}_{n}\in\mathbb{R}^{m}.
\]
\item For a user $u\in\mathcal{U}$, $d_{u}=\sum_{i}Y_{ui}$ and for an
item $i$, $d_{i}=r_{i}$. - Edge count and volumes: 
\[
|E_{G}|=\sum_{u,i}Y_{ui}=\vec{e}_{n}'Y\vec{e}_{m}=\vec{e}_{m}'\vec{r}.
\]
\end{itemize}

\subsubsection*{Volumes and degree summaries (following \cite{CL})}
\begin{itemize}
\item For any $S\subseteq V_{G}$ and $k\ge0$, with $d_{x}$ the degree
of vertex $x$: 
\[
vol_{k}(S)=\sum_{x\in S}d_{x}^{k},\quad vol(S)=vol_{1}(S),\quad|S|=vol_{0}(S),\quad\bar{d}_{S}=\frac{vol(S)}{|S|},\quad\tilde{d}_{S}=\frac{vol_{2}(S)}{vol(S)}.
\]
\item Global summaries:
\[
|G|=|V_{G}|=vol_{0}(G),\qquad\bar{d}=\frac{vol(G)}{|G|},\qquad\tilde{d}=\frac{vol_{2}(G)}{vol(G)}.
\]
\item For the item side in a recommender graph: 
\[
r_{i}=d_{i},\qquad\|\vec{r}\|_{2}^{2}=\sum_{i=1}^{m}r_{i}^{2}=vol_{2}(\mathcal{I}).
\]
\end{itemize}

\subsubsection*{Embedding models and low-rank structure (following \cite{LGCZHFCW,WHWZW})}
\begin{itemize}
\item Embedding-based recommenders assign $d$-dimensional representations
to users/items: 
\[
U\in\mathbb{R}^{n\times d},\quad V\in\mathbb{R}^{m\times d},\quad\text{rows: }u_{u}'\text{ and }v_{i}'.
\]
\item A generic link function $\mu:\mathbb{R}\to\mathbb{R}$ maps inner
products to scores: 
\[
\hat{y}_{ui}=\mu(u_{u}'v_{i}),\qquad\hat{Y}=(\hat{y}_{ui})=\mu(UV').
\]
Note that $\mu$ is applied entrywise, with mean-squared error: 
\[
L_{R}=\|Y-\hat{Y}\|_{F}^{2}\;\to\;\min.
\]
\item When $\mu$ is the identity (or after an appropriate inverse link/linearization),
the model reduces to finding a low-rank matrix $\hat{Y}=UV'$.
\end{itemize}

\subsubsection*{SVD and Eckart--Young--Mirsky (EYM)}

Let the singular value decomposition (SVD) of $Y$ be: 
\[
Y=P\Sigma Q'=\sum_{k=1}^{\text{rank}(Y)}\sigma_{k}p_{k}q_{k}',
\]
with $\{p_{k}\}_{k=1}^{n}$ and $\{q_{k}\}_{k=1}^{m}$ orthonormal
singular vectors and singular values $\sigma_{1}\ge\cdots\ge0$.

By the Eckart--Young--Mirsky theorem, the best rank-$\ell$ approximation
(in any unitarily invariant norm, in particular the Frobenius norm)
is the truncated SVD: 
\[
\hat{Y}_{\ell}=\sum_{k=1}^{\ell}\sigma_{k}p_{k}q_{k}'.
\]
Similar to \cite{LGCZHFCW}, we will be assuming that after the training
$||Y-\hat{Y}||_{F}^{2}$ (which, in general, depends on $\mu$) achieves
the minimum, and so $\hat{Y}=\hat{Y}_{\ell}$ for some $\ell$. Equivalent
eigen-relations: 
\[
Yq_{k}=\sigma_{k}p_{k},\qquad Y'p_{k}=\sigma_{k}q_{k},
\]
\[
(YY')p_{k}=\sigma_{k}^{2}p_{k},\qquad(Y'Y)q_{k}=\sigma_{k}^{2}q_{k}.
\]
Hence, for the truncated approximant $\hat{Y}_{\ell}$, the nonzero
singular values are $\{\sigma_{1},\dots,\sigma_{\ell}\}$; equivalently:
\[
\sigma_{i}(\hat{Y}_{\ell})=\sigma_{i}(Y)\;\;\text{for }i=1,\dots,\ell,\qquad\sigma_{i}^{2}(Y)=\lambda_{i}(YY')=\lambda_{i}(Y'Y).
\]

\begin{rem*}
If $Y'Yv=a^{2}v$ with $a>0$ and $u=\tfrac{1}{a}Yv$, then $Yv=au$
and $Y'u=av$; thus eigenvectors of $Y'Y$ (resp. $YY'$) are right
(resp. left) singular vectors of $Y$.
\end{rem*}

\subsubsection*{Spectrum of the bipartite adjacency}

For the bipartite adjacency matrix: 
\[
A=\begin{bmatrix}0 & Y\\
Y' & 0
\end{bmatrix},
\]
the spectrum is symmetric: 
\[
\text{spec}(A)=\{\pm\sigma_{k}:k=1,\dots,\text{rank}(Y)\}\;\cup\;\{0\;\text{with appropriate multiplicity}\}.
\]
In particular, the nonzero eigenvalues of $A$ are $\pm$ the singular
values of $Y$.

\section{Popularity bias memorization.}

\subsection{$\Pi_{1}$-memorization.}

Below is a direct generalization of the Popularity Bias Memorization
theorem from \cite{LGCZHFCW} to the case of an arbitrary recommender
graph.
\begin{thm*}
\textbf{$\boldsymbol{1A.}$} In the above notations, 
\[
A:\quad\cos(\vec{r},\vec{q_{1}})\geq\frac{\sigma_{1}^{2}}{vol_{2}(\mathcal{I})}\sqrt{1-\frac{1}{\sigma_{1}^{2}}(vol(\mathcal{I})-r_{max})}.
\]
\[
B:\quad\sigma_{1}^{2}\geq r_{max}.
\]
\end{thm*}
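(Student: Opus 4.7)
The plan is to prove part B first, as it feeds directly into part A. For B, I would invoke the Rayleigh/Courant--Fischer characterization
\[
\sigma_1^2\;=\;\max_{x\neq 0}\frac{\|Yx\|^2}{\|x\|^2},
\]
and test on the standard basis vector $e_{i^*}\in\mathbb{R}^m$, where $i^*$ is an item attaining $r_{max}$. Since the entries of $Y$ lie in $\{0,1\}$, one has $\|Ye_{i^*}\|^2=\sum_u Y_{u,i^*}^2=\sum_u Y_{u,i^*}=r_{max}$ and $\|e_{i^*}\|^2=1$, so $\sigma_1^2\geq r_{max}$.

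For A, my first move is to expand the all-ones user vector in the orthonormal basis of left singular vectors: writing $\alpha_k=\langle\vec{e}_n,p_k\rangle$ and applying $Y'$ together with $Y'p_k=\sigma_k q_k$ yields
\[
\vec{r}\;=\;Y'\vec{e}_n\;=\;\sum_k\sigma_k\alpha_k\,q_k.
\]
Since the $q_k$ are orthonormal, $\langle\vec{r},q_1\rangle=\sigma_1\alpha_1$ and Parseval gives $\sum_k\sigma_k^2\alpha_k^2=\|\vec{r}\|^2=vol_2(\mathcal{I})$. Hence $\cos(\vec{r},q_1)=\sigma_1\alpha_1/\sqrt{vol_2(\mathcal{I})}$, and squaring the target bound and clearing denominators reduces the whole statement to the single scalar inequality
\[
\alpha_1^2\cdot vol_2(\mathcal{I})\;\geq\;\sigma_1^2-\bigl(vol(\mathcal{I})-r_{max}\bigr),
\]
which is already trivially true whenever the right-hand side is nonpositive (so the square root in the original statement is understood to vanish in that regime).

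To establish the reduced inequality I would combine two further ingredients. First, the Frobenius/trace identity $\sum_k\sigma_k^2=\operatorname{tr}(Y'Y)=\sum_i r_i=vol(\mathcal{I})$, which combined with part B gives $\sum_{k\geq 2}\sigma_k^2=vol(\mathcal{I})-\sigma_1^2\leq vol(\mathcal{I})-r_{max}$. Second, the identity $\sigma_1^2\alpha_1^2=vol_2(\mathcal{I})-\sum_{k\geq 2}\sigma_k^2\alpha_k^2$ coming from the Parseval decomposition above. Together these recast the claim as an upper bound on the tail sum $\sum_{k\geq 2}\sigma_k^2\alpha_k^2$ that must produce exactly the right combination of $\sigma_1^2$, $vol_2(\mathcal{I})$ and $vol(\mathcal{I})-r_{max}$.

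The main obstacle is precisely this tail-bound step. Naive estimates such as $\alpha_k^2\leq n$ (from $|\alpha_k|\leq\|\vec{e}_n\|=\sqrt{n}$) or $\sigma_k^2\leq\sigma_1^2$ are too lossy and do not recover the prefactor $\sigma_1^2/vol_2(\mathcal{I})$ appearing in the stated bound; one essentially picks up an extra factor of $n$ or an unwanted dependence on $\sigma_2^2$. The challenge is to exploit the specific structure of the all-ones vector $\vec{e}_n$ together with the two trace identities $\sum_k\sigma_k^2=vol(\mathcal{I})$ and $\sum_k\sigma_k^2\alpha_k^2=vol_2(\mathcal{I})$ --- possibly via a Chebyshev/rearrangement-type comparison between the sequences $\{\sigma_k^2\}$ and $\{\alpha_k^2\}$ --- so that $vol(\mathcal{I})-r_{max}$ enters the residual cleanly and the desired scalar inequality drops out.
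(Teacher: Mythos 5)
Your part B is fine: the Rayleigh-quotient test on the standard basis vector of a most popular item is exactly the standard argument (the paper simply cites \cite{LGCZHFCW} for it). Your setup for part A is also on the right track and in fact coincides with the paper's opening step: with $\alpha_1=\vec{e}\,'\vec{p}_1$ you correctly get $\cos(\vec{r},\vec{q}_1)=\sigma_1\alpha_1/\sqrt{vol_2(\mathcal{I})}$, and the theorem does reduce to a scalar lower bound on $\alpha_1^2$ of the form $\alpha_1^{2}\ \geq\ \sigma_1^{2}-\bigl(vol(\mathcal{I})-r_{max}\bigr)$. But you stop exactly at the step that constitutes the whole content of the theorem, and the direction you propose for closing it (a Chebyshev/rearrangement comparison using only the identities $\sum_k\sigma_k^2=vol(\mathcal{I})$, $\sum_k\sigma_k^2\alpha_k^2=vol_2(\mathcal{I})$, $\sum_k\alpha_k^2=n$, plus $\sigma_1^2\geq r_{max}$) cannot succeed: these constraints do not control $\alpha_1$ individually, since nothing in them prevents $\alpha_1=0$ while $\sigma_1^2>vol(\mathcal{I})-r_{max}$. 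Extra structure beyond the spectrum and these traces is genuinely needed; the paper's Theorem 1LP (the $k=1$ case) makes this precise, as it identifies the best bound obtainable from spectral-plus-$\mu$ data alone, and that bound is a different, spectral-gap-dependent expression, not the one in Theorem 1A.

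The missing ingredients are combinatorial and sign-based. First, by Perron--Frobenius one may choose $\vec{q}_1\geq 0$, hence $\vec{p}_1=\tfrac{1}{\sigma_1}Y\vec{q}_1\geq 0$; then, because the column $\vec{y}_s$ of $Y$ for the most popular item $s$ is a $0/1$ vector, $\alpha_1=\vec{e}\,'\vec{p}_1\ \geq\ \vec{y}_s'\vec{p}_1\ =\ \sigma_1 q_{s1}$, which converts the problem into a lower bound on the single coordinate $q_{s1}$. Second, writing $B_s$ for $Y$ with column $s$ deleted, one has $1-q_{s1}^2=\sum_{i\neq s}q_{i1}^2=\sigma_1^{-2}\|B_s'\vec{p}_1\|^2\leq\sigma_1^{-2}\lambda_{\max}(B_s'B_s)$, and the row-sum (Perron--Frobenius/Gershgorin-type) estimate for the nonnegative matrix $B_s'B_s$ gives $\lambda_{\max}(B_s'B_s)\leq vol(\mathcal{I})-r_{max}$; combining yields $q_{s1}^2\geq 1-\sigma_1^{-2}(vol(\mathcal{I})-r_{max})$ and hence the reduced inequality and the theorem. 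Without the nonnegative choice of $\vec{p}_1,\vec{q}_1$ and this column-deletion bound, your plan has no mechanism by which $r_{max}$ (as opposed to $\sigma_1^2$) enters the estimate, so as written the proof of part A is incomplete.
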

\begin{proof}
A: The proof in general follows \cite{LGCZHFCW}. 
\[
\cos(\vec{r},\vec{q_{1}})=\frac{(Y'\vec{e})'\vec{q_{1}}}{||\vec{r}||}=\frac{\vec{e}'(Y\vec{q_{1}})}{||\vec{r}||}=\frac{\sigma_{1}\vec{e}'\vec{p}_{1}}{\sqrt{vol_{2}(\mathcal{I})}}
\]
\[
\cos(\vec{r},\vec{q_{s}})=\frac{(Y'\vec{e})'\vec{q_{s}}}{||\vec{r}||}=\frac{\vec{e}'(Y\vec{q_{s}})}{||\vec{r}||}=\frac{\sigma_{s}\vec{e}'\vec{p}_{s}}{\sqrt{vol_{2}(\mathcal{I})}}
\]
If $Z=Y'Y=||z_{kl}||$, $z_{kl}\geq0$, then 
\[
\sigma_{1}^{2}=\max_{||\vec{q}||=1}\vec{q}'(Y'Y)\vec{q}=\max_{||\vec{q}||=1}\sum_{k=1}^{m}\sum_{l=1}^{m}q_{k}z_{kl}q_{l}
\]
As explained in \cite{LGCZHFCW}, it is possible to chose $\vec{q}_{1}$
so that $\vec{q}_{1}\geq0$ ($q_{1i}>0$), then $\vec{p}_{1}=\frac{1}{\sigma_{1}}(Y'\vec{q}_{1})\geq0$.
Note that all other eigenvectors (and so singular vectors) may have
at least one negative component (see Perron-Frobenius theorem).

Let $s$ be the number of an item, $\vec{y}_{k}=||y_{\cdot,k}||$,
$\vec{p}_{1}\geq0$, $y_{u,i}\in\{0,1\}$. Since $Y'\vec{p}_{1}=\sigma_{1}\vec{q}_{1}$,
we have $\vec{y}_{s}'\vec{p}_{1}=\sigma_{1}q_{s1}$ and $\vec{e}'\vec{p}_{1}\geq\vec{y}_{s}'\vec{p}_{1}=\sigma_{1}q_{s1}$.
So we get the initial lower bound, 
\[
\cos(\vec{r},\vec{q_{1}})\geq\frac{\sigma_{1}^{2}q_{s1}}{\sqrt{vol_{2}(\mathcal{I})}}.
\]
We shall try to estimate the $q_{s1}$ term through $q_{s1}^{2}=1-\sum_{i\neq s}q_{i1}^{2}$,
since $Q$ is unitary matrix.

If $B_{s}=Y$ without $\vec{y}_{s}$, then for any $j$ 
\[
\sum_{i\neq s}\vec{y}_{j}\vec{y}_{i}\leq\sum_{i\neq s}\vec{y}_{i}\vec{y}_{i}=\sum_{i\neq s}r_{i}=vol(\mathcal{I})-r_{s}
\]
Then for $C_{s}=B_{s}'B_{s}=||c_{ij}||$ we have $\sum_{j}c_{ij}\leq vol(\mathcal{I})-r_{s}$,
and so, by Perron-Frobenius theorem, 
\[
\lambda_{max}(B_{s}'B_{s})\leq\max\sum_{j}c_{ij}\leq vol(\mathcal{I})-r_{s}.
\]
This gives us an estimate of ${\displaystyle \sum_{i\neq s}(q_{i1})^{2}}$,
\[
\sum_{i\neq s}(q_{i1})^{2}=\frac{1}{\sigma_{1}^{2}}||B_{s}'\vec{p}_{1}||^{2}\leq\frac{1}{\sigma_{1}^{2}}\lambda_{max}(B_{s}'B_{s})\leq\frac{1}{\sigma_{1}^{2}}(vol(\mathcal{I})-r_{s}).
\]
Thus we get 
\[
q_{s1}^{2}=1-\sum_{i\neq s}q_{1i}^{2}\geq1-\frac{1}{\sigma_{1}^{2}}(vol(\mathcal{I})-r_{s})
\]
and so 
\[
\cos(\vec{r},\vec{q_{1}})\geq\frac{\sigma_{1}^{2}}{\sqrt{vol_{2}(\mathcal{I})}}\sqrt{1-\frac{1}{\sigma_{1}^{2}}(vol(\mathcal{I})-r_{s})},
\]
and, in particular, 
\[
\cos(\vec{r},\vec{q_{1}})\geq\frac{\sigma_{1}^{2}}{\sqrt{vol_{2}(\mathcal{I})}}\sqrt{1-\frac{1}{\sigma_{1}^{2}}(vol(\mathcal{I})-r_{max})}.
\]
B: same as in \cite{LGCZHFCW}. 
\end{proof}
\begin{rem}
i. If item degrees or rank-frequencies follow distribution $\rho$,
define $vol(\rho)$ and $vol_{2}(\rho)$ in similar way (assuming
they exist, i.e.~the required sums converge). Then $vol(\rho)>vol(\mathcal{I})$
and $vol_{2}(\rho)>vol_{2}(\mathcal{I})$ and so we get a new estimate:
\[
\cos(\vec{r},\vec{q_{1}})\geq\frac{\sigma_{1}^{2}}{\sqrt{vol_{2}(\mathcal{I})}}\sqrt{1-\frac{1}{\sigma_{1}^{2}}(vol(\mathcal{I})-r_{max})}\geq\frac{\sigma_{1}^{2}}{\sqrt{vol_{2}(\rho)}}\sqrt{1-\frac{1}{\sigma_{1}^{2}}(vol(\rho)-r_{max})}.
\]
This the type of estimate from \cite{LGCZHFCW}.

ii. In order to analyze a popularity distribution on alignment with
principal singular vector one need to estimate further the part of
the RHS in theorem A. One can use the estimates for radical and the
singular number to simplify it. The standard inequalities for estimating
the square root are $\sqrt{1-a}\geq1-\frac{a}{2}-\frac{a^{2}}{2}\geq1-a$.
If $\sigma_{1}^{2}\geq L$, then we get the following estimates: 
\[
\begin{aligned}\cos(\vec{r},\vec{q}_{1}) & \ge\frac{L}{\sqrt{\,vol_{2}(\mathcal{I})\,}}\sqrt{1-\frac{1}{L\!}\bigl(vol(\mathcal{I})-r_{\max}\bigr)}\ge\\
 & \ge\frac{L}{\sqrt{\,vol_{2}(\mathcal{I})\,}}\left(1-\frac{vol(\mathcal{I})-r_{\max}}{2L}-\frac{\bigl(vol(\mathcal{I})-r_{\max}\bigr)^{2}}{2L^{2}}\right)\ge\\
 & \ge\frac{L}{\sqrt{\,vol_{2}(\mathcal{I})\,}}\left(1-\frac{vol(\mathcal{I})-r_{\max}}{L}\right).
\end{aligned}
\]
The rough estimates for $\sigma_{1}^{2}$ are $u_{min}\leq\sigma_{1}\leq u_{max}$.
There is more refined estimate by Kumar \cite{K}, that we review
in the appendix, 
\[
\sigma_{1}^{2}\geq K_{L}\simeq\bar{d}+\frac{s_{K}}{\sqrt{p-1}},\quad s_{K}^{2}\simeq Var(d)+\frac{W}{p}+\frac{4B}{p},
\]
where 
\end{rem}

\begin{itemize}
\item $s_{K}$ is the ``upscaled variance'', 
\item $p=\left\lfloor \frac{n+m}{2}\right\rfloor $, 
\item $W=W(G)$ is the ``wedge number'', 
\item $B=c_{4}(G)$ is the ``butterfly number'' of the graph $G$.
\end{itemize}
Kumar's lower bound $K_{L}$ depends on the topology of the whole
graph $G$, showing deep connection between this topology and popularity
alignment.

\subsection{$\Pi_{k}$-memorization.}

For a general popularity distribution we may not necessarily have
an alignment with the principal singular vector. In this section we
will prove that we always have a closer alignment with the span of
the top-$k$ singular vectors. We will prove several bounds, each
with their own pros/cons.

Let $\Pi_{k}$\hspace{0pt} be the orthogonal projector onto the span
of the first $k$ right singular vectors $\{\vec{q_{1}},\dots,\vec{q_{k}}\}$
and $Q_{k}=[\vec{q_{1}},\dots,\vec{q_{k}}]$ is the matrix formed
by them. Define the angle between the span and the popularity vector,

\[
\theta_{k}:=\angle\big(r,\ \Pi_{k}\vec{r}\big).
\]

Then 
\[
\cos\theta_{k}\;=\;\frac{\|\Pi_{k}\vec{r}\|}{\|\vec{r}\|}\;=\sqrt{\frac{\vec{r}'\Pi_{k}\vec{r}}{||\vec{r}||^{2}}}\;=\;\sqrt{\kappa_{k}},\qquad\Pi_{k}\;=\;Q_{k}Q_{k}',\qquad\kappa_{k}:=\sum_{i=1}^{k}\cos^{2}(r,\vec{q}_{i}).
\]

\subsubsection*{Combinatorial estimate.}

Our proof will use the strategy from $k=1$, so in preparation we
will prove a few lemmas. Let $S\subset\mathcal{I}$ be any set of
items, $B_{S}$\hspace{0pt} be $Y$ with the columns in $S$ removed,
and $\vec{v}_{S}$ be a projection of a vector $\vec{v}$ onto the
span of $\{\vec{q}_{s}\}_{s\in S}$. Also let 
\[
\tau_{S}=\sum_{j,s\in S}q_{js}^{2}=\sum_{j=1}^{k}\|(\vec{q}_{j})_{S}\|_{2}^{2}\;,\qquad\text{and}\qquad r_{S}=\sum_{r\in S}r_{s}.
\]

\begin{lem*}
1. With $\tau_{S}$ as above,
\[
\tau_{S}\geq k-(vol(\mathcal{I})-r_{S})\left(\sum_{j=1}^{k}\frac{1}{\sigma_{j}^{2}}\right)=k-\Delta_{S}H_{k},
\]
where $\Delta_{S}\LyXZeroWidthSpace:=vol(I)-r_{S}$\hspace{0pt},
and $H_{k}:=\sum_{j=1}^{k}\sigma_{j}^{-2}$.
\end{lem*}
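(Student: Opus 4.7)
The plan is to adapt the combinatorial argument from the $k=1$ theorem uniformly across the top-$k$ right singular vectors. The first step will be to rewrite $\tau_S$ via its orthogonal complement on the item side. Writing $S^{c}=\mathcal{I}\setminus S$ and using $\|\vec{q}_j\|_2=1$,
\[
\tau_S \;=\; \sum_{j=1}^{k}\|(\vec{q}_j)_{S}\|_2^{2} \;=\; \sum_{j=1}^{k}\bigl(1-\|(\vec{q}_j)_{S^c}\|_2^{2}\bigr) \;=\; k-\sum_{j=1}^{k}\|(\vec{q}_j)_{S^c}\|_2^{2}.
\]
Thus the lemma reduces to establishing $\sum_{j=1}^{k}\|(\vec{q}_j)_{S^c}\|_2^{2}\le\Delta_S H_k$.

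Second, I would use the SVD relation coordinate-wise. From $Y'\vec{p}_j=\sigma_j\vec{q}_j$ and the definition of $B_S$ (which is $Y$ with the columns in $S$ removed, i.e.\ retaining the columns $\vec{y}_s$ for $s\in S^c$), the $s$-th entry of $B_S'\vec{p}_j$ for $s\in S^c$ equals $\vec{y}_s'\vec{p}_j=\sigma_j q_{js}$, so $B_S'\vec{p}_j=\sigma_j (\vec{q}_j)_{S^c}$. Using $\|\vec{p}_j\|_2=1$,
\[
\|(\vec{q}_j)_{S^c}\|_2^{2} \;=\; \frac{1}{\sigma_j^{2}}\|B_S'\vec{p}_j\|_2^{2} \;\le\; \frac{1}{\sigma_j^{2}}\,\lambda_{\max}(B_S'B_S).
\]

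Third, I would control $\lambda_{\max}(B_S'B_S)$ by the same row-sum estimate as in the $k=1$ proof. The Gram entries $c_{ij}=\vec{y}_i'\vec{y}_j$ for $i,j\in S^c$ are nonnegative, and since $\vec{y}_i\in\{0,1\}^{n}$ (hence $\vec{y}_i\le\vec{e}_n$ componentwise),
\[
\sum_{j\in S^c} c_{ij} \;=\; \vec{y}_i'\!\!\sum_{j\in S^c}\!\!\vec{y}_j \;\le\; \vec{e}_n'\!\!\sum_{j\in S^c}\!\!\vec{y}_j \;=\; \sum_{j\in S^c} r_j \;=\; vol(\mathcal{I})-r_S \;=\; \Delta_S.
\]
By Perron--Frobenius, $\lambda_{\max}(B_S'B_S)\le\max_i\sum_j c_{ij}\le\Delta_S$, \emph{uniformly in $j$}. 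Summing the second-step inequality over $j=1,\dots,k$ then gives $\sum_{j=1}^{k}\|(\vec{q}_j)_{S^c}\|_2^{2}\le\Delta_S\sum_{j=1}^{k}\sigma_j^{-2}=\Delta_S H_k$, which combined with the identity from the first step yields $\tau_S\ge k-\Delta_S H_k$.

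There is no real obstacle here beyond the $k$-vs-$S^c$ bookkeeping: the key structural observation is that the matrix $B_S'B_S$ and its row-sum bound $\Delta_S$ do not depend on $j$, so the only $j$-dependence is the factor $\sigma_j^{-2}$, which is precisely what assembles into $H_k$. This makes the passage from $k=1$ to general $k$ essentially mechanical, and explains why the ``cost'' of projecting onto a $k$-dimensional singular subspace enters the bound exactly through the sum $H_k$ of squared reciprocal top singular values.
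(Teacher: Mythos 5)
Your proposal is correct and follows essentially the same route as the paper's own proof: the complement identity $\tau_S=k-\sum_j\|(\vec{q}_j)_{S^c}\|^2$, the relation $B_S'\vec{p}_j=\sigma_j(\vec{q}_j)_{S^c}$ giving $\|(\vec{q}_j)_{S^c}\|^2\le\sigma_j^{-2}\lambda_{\max}(B_S'B_S)$, and the nonnegative row-sum bound $\lambda_{\max}(B_S'B_S)\le vol(\mathcal{I})-r_S$, summed over $j$. The only cosmetic difference is that you bound the Gram row sums by comparing $\vec{y}_i$ with $\vec{e}_n$, while the paper compares $\vec{y}_j'\vec{y}_i$ with $\vec{y}_i'\vec{y}_i=r_i$; both yield the same $\Delta_S$.
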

The bound of the lemma can be improved using Ky Fan's principle. 
\begin{lem*}
$1'$. 
\[
k-\frac{1}{\sigma_{1}^{2}}\sum_{j=n-k+1}^{n}\lambda_{j}(B_{S}B_{S}')\;\geq\;\tau_{S}\;\ge\;k-\frac{1}{\sigma_{k}^{2}}\sum_{j=1}^{k}\lambda_{j}(B_{S}'B_{S}).
\]
\end{lem*}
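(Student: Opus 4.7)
The plan is to mirror the strategy used for $\tau_{\{s\}}=q_{s1}^{2}$ in Theorem $1A$, but summed over $j=1,\dots,k$, and with the ad hoc Perron--Frobenius bound on $\lambda_{\max}(B_S'B_S)$ replaced by Ky Fan's trace inequalities for sums of the top (resp.\ bottom) $k$ eigenvalues. The starting identity, from $\|\vec{q}_j\|=1$, is
\[
k-\tau_S \;=\; \sum_{j=1}^{k}\bigl(1-\|(\vec{q}_j)_S\|^{2}\bigr)\;=\;\sum_{j=1}^{k}\|(\vec{q}_j)_{S^c}\|^{2},\qquad S^c:=\mathcal{I}\setminus S.
\]
To translate the tail masses into a spectral quantity, let $E_S:\mathbb{R}^{|S^c|}\hookrightarrow\mathbb{R}^{m}$ be the coordinate inclusion onto $S^c$, so $B_S=YE_S$. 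Combining this with $Y'\vec{p}_j=\sigma_j\vec{q}_j$ produces the key dictionary
\[
B_S'\vec{p}_j \;=\; E_S'Y'\vec{p}_j \;=\; \sigma_j\,(\vec{q}_j)_{S^c},\qquad \|(\vec{q}_j)_{S^c}\|^{2} \;=\; \frac{\vec{p}_j'\,B_S B_S'\,\vec{p}_j}{\sigma_j^{2}},
\]
reducing the problem to estimating $k-\tau_S=\sum_{j=1}^{k}\sigma_j^{-2}\,\vec{p}_j'\,B_S B_S'\,\vec{p}_j$.

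The two halves of Lemma $1'$ now follow by sandwiching $\sigma_j^{-2}$ between $\sigma_1^{-2}$ and $\sigma_k^{-2}$ (valid for $1\le j\le k$) and invoking Ky Fan for the PSD matrix $M:=B_S B_S'$ on the orthonormal system $\{\vec{p}_1,\dots,\vec{p}_k\}$. The maximum principle
\[
\sum_{j=1}^{k}\vec{p}_j'M\vec{p}_j \;\le\; \sum_{j=1}^{k}\lambda_j(M)
\]
combined with $\sigma_j^{-2}\le\sigma_k^{-2}$, and the fact that $B_SB_S'$ and $B_S'B_S$ share their nonzero spectra, yields the right-hand inequality of Lemma $1'$. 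Dually, the minimum principle
\[
\sum_{j=1}^{k}\vec{p}_j'M\vec{p}_j \;\ge\; \sum_{j=n-k+1}^{n}\lambda_j(M),
\]
together with $\sigma_j^{-2}\ge\sigma_1^{-2}$, produces the left-hand inequality, and here one really does work with the bottom $k$ eigenvalues of the full $n\times n$ matrix $B_SB_S'$ itself.

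The calculation is essentially pure bookkeeping once the dictionary $\|(\vec{q}_j)_{S^c}\|^{2}\leftrightarrow \sigma_j^{-2}\vec{p}_j'B_SB_S'\vec{p}_j$ is in place. The only step I expect to be a source of subtle error is the asymmetry between the two bounds in which matrix's spectrum is used: on the lower side the top-$k$ eigenvalues agree for $B_SB_S'$ and $B_S'B_S$, so the smaller matrix $B_S'B_S$ may be used, while on the upper side the bottom $k$ eigenvalues of $B_SB_S'$ genuinely depend on the ambient dimension $n$ and can include spurious zeros whenever $n>m-|S|$, in which case the bound collapses to the trivial $\tau_S\le k$. One should also note, for sanity, that the previous Lemma $1$ is the coarser consequence of bounding each $\vec{p}_j'B_SB_S'\vec{p}_j$ individually by $\lambda_{\max}(B_SB_S')\le vol(\mathcal{I})-r_S$ before summing, which explains why Lemma $1'$ is genuinely a refinement in regimes where the spectrum of $B_SB_S'$ decays quickly.
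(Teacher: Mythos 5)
Your proof is correct and follows essentially the same route as the paper: the key identity $\|(\vec{q}_j)_{S^c}\|^{2}=\sigma_j^{-2}\,\vec{p}_j'B_SB_S'\vec{p}_j$, the sandwich $\sigma_1^{-2}\le\sigma_j^{-2}\le\sigma_k^{-2}$ (which the paper phrases as the operator inequality $\sigma_1^{-2}P_k^{(u)}\preceq W_k\preceq\sigma_k^{-2}P_k^{(u)}$), and Ky Fan's extremal principles applied to $\sum_{j\le k}\vec{p}_j'B_SB_S'\vec{p}_j=\mathrm{Tr}(P_k^{(u)}B_SB_S')$. Your explicit remark that the top-$k$ eigenvalues of $B_SB_S'$ and $B_S'B_S$ coincide (justifying the form of the right-hand bound) is a point the paper passes over silently, and is a welcome clarification.
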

\begin{thm*}
$\boldsymbol{1C.}$ Let $U_{\tau_{S}}\geq\tau_{S}\geq L_{\tau_{S}}$.
Then 
\[
\vec{r}'\Pi_{k}\vec{r}\;\geq\;\Big(\,\big[L_{\tau_{S}}-(|S|-1)\big]_{+}\Big)\,\|\vec{r}_{S}\|^{2}\;-\;2\,\|\vec{r}_{S}\|\,\|\vec{r}_{S^{c}}\|,
\]
 
\[
\vec{r}'\Pi_{k}\vec{r}\;\leq\ \min\!\Big\{1,\ U_{\tau_{S}}\Big\}\,\|r_{S}\|^{2}+2\|r_{S}\|\,\|r_{S^{c}}\|+\|r_{S^{c}}\|^{2}.
\]
\end{thm*}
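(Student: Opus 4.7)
The plan is to split $\vec{r}=\vec{r}_S+\vec{r}_{S^c}$ along the coordinate partition $S\sqcup S^c$ (the two pieces are orthogonal with $\|\vec{r}_S\|^2=\sum_{s\in S}r_s^2$) and to expand
\[
\vec{r}'\Pi_k\vec{r}\;=\;\vec{r}_S'\Pi_k\vec{r}_S\;+\;2\,\vec{r}_S'\Pi_k\vec{r}_{S^c}\;+\;\vec{r}_{S^c}'\Pi_k\vec{r}_{S^c}.
\]
The two ``outer'' blocks are handled by $\|\Pi_k\|_{\mathrm{op}}=1$ and Cauchy--Schwarz: $|\vec{r}_S'\Pi_k\vec{r}_{S^c}|\le\|\vec{r}_S\|\,\|\vec{r}_{S^c}\|$ and $0\le\vec{r}_{S^c}'\Pi_k\vec{r}_{S^c}\le\|\vec{r}_{S^c}\|^2$. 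For the upper estimate I retain both; for the lower estimate I discard the non-negative $\vec{r}_{S^c}'\Pi_k\vec{r}_{S^c}$ and absorb the worst-case sign of the cross term, which is exactly the $-2\|\vec{r}_S\|\,\|\vec{r}_{S^c}\|$ penalty that appears in the statement.

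The crux is the diagonal block $\vec{r}_S'\Pi_k\vec{r}_S$, which I recognize as the quadratic form of $\vec{r}_S$ against the principal $|S|\times|S|$ submatrix $[\Pi_k]_{SS}$ of $\Pi_k$. Its trace is exactly
\[
\operatorname{tr}[\Pi_k]_{SS}\;=\;\sum_{s\in S}\sum_{j=1}^{k}q_{sj}^{2}\;=\;\tau_S,
\]
and, since $\Pi_k$ is an orthogonal projector of rank $k$, Cauchy interlacing constrains every eigenvalue of $[\Pi_k]_{SS}$ to lie in $[0,1]$. That single fact drives both directions. The largest eigenvalue is at most $\min(1,\tau_S)$, whence $\vec{r}_S'\Pi_k\vec{r}_S\le\min\{1,U_{\tau_S}\}\,\|\vec{r}_S\|^2$. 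For the reverse inequality I use the elementary observation that $|S|$ numbers in $[0,1]$ summing to $\tau_S$ must have minimum at least $\tau_S-(|S|-1)$, giving $\mu_{\min}\!\bigl([\Pi_k]_{SS}\bigr)\ge\bigl[\tau_S-(|S|-1)\bigr]_+$, hence $\vec{r}_S'\Pi_k\vec{r}_S\ge\bigl[L_{\tau_S}-(|S|-1)\bigr]_+\|\vec{r}_S\|^2$. Reassembling the three blocks in each direction yields the two inequalities of the theorem.

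The main obstacle---modest in difficulty but essential to get right---is the two-sided eigenvalue control of $[\Pi_k]_{SS}$. The upper bound $\mu_i\le 1$ is needed not only for the $\min\{1,U_{\tau_S}\}$ cap, but, used in \emph{reverse}, to force the trace onto the smallest eigenvalue for the lower bound; without interlacing one only has $\mu_i\ge 0$ and the trace identity is then useless for controlling $\mu_{\min}$. Once this is in place, the non-trivial regime is $|S|\le k$, with the lower bound degenerating to the trivial $0$ otherwise, which is exactly what the $[\,\cdot\,]_+$ records.
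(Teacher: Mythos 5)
Your proposal is correct and follows essentially the same route as the paper: split $\vec{r}'\Pi_k\vec{r}$ into the $S$/$S^c$ blocks, bound the cross and $S^c$ terms by $\|\Pi_k\|_2=1$ and Cauchy--Schwarz, and control the principal submatrix $[\Pi_k]_{SS}$ via its trace $\tau_S$ together with the fact that its eigenvalues lie in $[0,1]$ (the paper gets this directly from $0\preceq\Pi_k\preceq I$ rather than by interlacing, a cosmetic difference). The eigenvalue bookkeeping ($\lambda_{\max}\le\min(1,U_{\tau_S})$, $\lambda_{\min}\ge[L_{\tau_S}-(|S|-1)]_+$) matches the paper's Step 1, so the argument is complete.
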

Normalizing the inequality from the theorem by $\|\vec{r}\|=\sqrt{\operatorname{vol}_{2}(\mathcal{I})}$\hspace{0pt},
for the values of $L_{\tau_{S}}$, $U_{\tau_{S}}$ from lemmas 1,
1', we are getting the following bounds for the angle:

\textbf{Combinatorial bounds:}

\textbf{A1. }For general $|S|$, using $L_{\tau_{S}}=k-\Delta_{S}H_{k}$,
with $\Delta_{S}\LyXZeroWidthSpace:=vol(I)-r_{S}$\hspace{0pt}, and
$H_{k}:=\sum_{j=1}^{k}\sigma_{j}^{-2}$, 
\[
\quad\cos\theta_{k}\;=\;\frac{\|\Pi_{k}\vec{r}\|}{\|\vec{r}\|}\;\ge\;\frac{\Big(\,[\,k-\Delta_{S}H_{k}-(|S|-1)\,]_{+}\ \|\vec{r}_{S}\|^{2}\ -\ 2\,\|\vec{r}_{S}\|\,\|\vec{r}_{S^{c}}\|\Big)_{+}^{1/2}}{\ \sqrt{\operatorname{vol}_{2}(\mathcal{I})}\ }\;
\]

\textbf{A2.} For $|S|=k$, A1 becomes 
\[
\quad\cos\theta_{k}\;=\;\frac{\|\Pi_{k}\vec{r}\|}{\|\vec{r}\|}\;\ge\;\frac{\Big(\,[\,1-\Delta_{S}H_{k}\,]_{+}\ \|\vec{r}_{S}\|^{2}\ -\ 2\,\|\vec{r}_{S}\|\,\|\vec{r}_{S^{c}}\|\Big)_{+}^{1/2}}{\ \sqrt{\operatorname{vol}_{2}(\mathcal{I})}\ }\;
\]

\begin{rem}
If $S$ = ``\textbf{$k$ most popular items}'' (largest $r_{s}$\hspace{0pt}),
then $r_{S}$\hspace{0pt} and $\|r_{S}\|^{2}=\sum_{s\in S}r_{s}^{2}$
\hspace{0pt} are large, and the deficit $\Delta_{S}=\operatorname{vol}(\mathcal{I})-r_{S}$\hspace{0pt}
is small - tightening the bound. As $k$ grows, $\Delta_{S}$\hspace{0pt}
shrinks and $H_{k}$\hspace{0pt} grows slowly; the factor $[1-\Delta_{S}H_{k}]_{+}$\hspace{0pt}
increases, so the bound strengthens---matching the intuition that
a larger top subspace captures more of $r$.
\end{rem}

\textbf{B1.} \textbf{Ky Fan strengthening.} Using $L_{\tau_{S}}=k-\frac{1}{\sigma_{k}^{2}}\sum_{j=1}^{k}\lambda_{j}(B_{S}'B_{S})$
\[
\cos\theta_{k}\;=\;\frac{\|\Pi_{k}\vec{r}\|}{\|\vec{r}\|}\;\geq\;\sqrt{\frac{\Big(\,\Big[k-\frac{1}{\sigma_{k}^{2}}\sum_{j=1}^{k}\lambda_{j}(B_{S}'B_{S})-(|S|-1)\Big]_{+}\Big)\,\|\vec{r}_{S}\|^{2}\;-\;2\,\|\vec{r}_{S}\|\,\|\vec{r}_{S^{c}}\|}{\mathrm{vol}_{2}(\mathcal{I})}}
\]

\textbf{B2.} For $|S|=k$, B1 becomes 
\[
\cos\theta_{k}\;=\;\frac{\|\Pi_{k}\vec{r}\|}{\|\vec{r}\|}\;\geq\;\sqrt{\frac{\Big(\,\Big[1-\frac{1}{\sigma_{k}^{2}}\sum_{j=1}^{k}\lambda_{j}(B_{S}'B_{S})\Big]_{+}\Big)\,\|\vec{r}_{S}\|^{2}\;-\;2\,\|\vec{r}_{S}\|\,\|\vec{r}_{S^{c}}\|}{\mathrm{vol}_{2}(\mathcal{I})}}
\]

These are the bounds A1/A2 with the crude $\lambda_{\max}$ in the
middle steps (see appendix) replaced\hspace{0pt} by the \emph{Ky
Fan sum} $\sum_{j=1}^{k}\lambda_{j}(B_{S}'B_{S})$. As the result,
it strictly strengthens A1/A2 (since $\sum_{j\leq k}\lambda_{j}\leq k\lambda_{{\rm max}}$).
Still combinatorial (uses which columns are removed).

\textbf{B3.} \textbf{Cauchy interlacing simplification.} Using Cauchy
interlacing, 
\[
\sigma_{i}^{2}=\lambda_{i}(YY')\;\ge\;\lambda_{i}(B_{S}B_{S}')\;\ge\;\lambda_{i+|S|}(YY')=\sigma_{i+|S|}^{2},
\]
we can get the bound in terms of the singular values of $Y$. We get
from B1 
\[
\cos\theta_{k}\;=\;\frac{\|\Pi_{k}\vec{r}\|}{\|\vec{r}\|}\;\geq\;\sqrt{\frac{\Big[k-\frac{1}{\sigma_{k}^{2}}\left(\sum_{j=1}^{k}\sigma_{j}^{2}\right)-(|S|-1)\Big]_{+}\,\|\vec{r}_{S}\|^{2}\;-\;2\,\|\vec{r}_{S}\|\,\|\vec{r}_{S^{c}}\|}{\mathrm{vol}_{2}(\mathcal{I})}}
\]
and from B2 
\[
\cos\theta_{k}\;=\;\frac{\|\Pi_{k}\vec{r}\|}{\|\vec{r}\|}\;\geq\;\sqrt{\frac{\,\Big[1-\frac{1}{\sigma_{k}^{2}}\left(\sum_{j=1}^{k}\sigma_{j}^{2}\right)\Big]_{+}\,\|\vec{r}_{S}\|^{2}\;-\;2\,\|\vec{r}_{S}\|\,\|\vec{r}_{S^{c}}\|}{\mathrm{vol}_{2}(\mathcal{I})}}
\]
The bounds B3 become basis-invariant and easier to evaluate, but \textbf{weaker}
than B1/B2.

\textbf{Upper bounds. C1.} We have the general upper bound 
\[
\cos\theta_{k}\;=\;\frac{\|\Pi_{k}\vec{r}\|}{\|\vec{r}\|}\;\leq\;\sqrt{\frac{\lambda_{1}(M_{S})\|\vec{r}_{S}\|^{2}\;+\;2\,\|\vec{r}_{S}\|\,\|\vec{r}_{S^{c}}\|+\|\vec{r}_{S^{c}}\|^{2}}{\mathrm{vol}_{2}(\mathcal{I})}}
\]
with 
\[
\lambda_{1}(M_{S})\leq\min\left(1,U_{\tau_{S}}-\sum_{i=2}^{|S|}\lambda_{i}(M_{S})\right)\leq\min(1,U_{\tau_{S}}).
\]
Smaller terms in the inequality give sharper bound.

\textbf{C2.} \textbf{Ky Fan's principle} \textbf{lower} side (sum
of \emph{smallest} eigenvalues of $B_{S}B_{S}'$\hspace{0pt}). \\
Using
$U_{\tau_{S}}=\ k-\frac{1}{\sigma_{1}^{2}}\sum_{j=n-k+1}^{n}\lambda_{j}(B_{S}B_{S}')$,
we can specialize C1 to 
\[
\cos\theta_{k}\;=\;\frac{\|\Pi_{k}\vec{r}\|}{\|\vec{r}\|}\;\leq\;\sqrt{\frac{\min(1,\ k-\frac{1}{\sigma_{1}^{2}}\sum_{j=n-k+1}^{n}\lambda_{j}(B_{S}B_{S}'))\|\vec{r}_{S}\|^{2}\;+\;2\,\|\vec{r}_{S}\|\,\|\vec{r}_{S^{c}}\|+\|\vec{r}_{S^{c}}\|^{2}}{\mathrm{vol}_{2}(\mathcal{I})}}
\]
(we write here only the largest inequality).

\textbf{C3.} Using \textbf{Cauchy interlacing} we can replace some
of the $\lambda_{j}(B_{S}B_{S}')$ in C2 with $\sigma_{j+|S|}^{2}$.
\begin{rem}
Trivial but useful: since $\Pi_{k}\succeq\Pi_{k-1}$\hspace{0pt}
(in Loewner's order), we always have $\cos\theta_{k}\ge\cos\theta_{k-1}$\hspace{0pt},
in particular $\cos\theta_{k}\ge\cos\theta_{1}$. So our lower $\Pi_{1}$-bound
already applies to all $k$. $\Pi_{k}$-bound \textbf{improves} it
whenever the top-$k$ items hold a substantial share of popularity.
\end{rem}

\subsubsection{Linear programming estimate.}

Let us prove another estimate using linear programming methods. From
the SVD decomposition of $Y=\sum_{i}\sigma_{i}\,\vec{p}_{i}\vec{q}_{i}'$\hspace{0pt}
and expansion 
\[
\vec{e}=\sum_{i}c_{i}\,\vec{p}_{i},\quad c_{i}:=\vec{p}_{i}'\vec{e}=(\vec{p},\,\vec{e}).
\]
we obtain 
\[
\vec{r}=Y'\vec{e}=\sum_{i}\sigma_{i}c_{i}\,\vec{q}_{i},\qquad\kappa_{k}=\frac{\sum_{i=1}^{k}\sigma_{i}^{2}c_{i}^{2}}{\sum_{j}\sigma_{j}^{2}c_{j}^{2}}.
\]
Set $s_{i}:=\sigma_{i}^{2}$\hspace{0pt} (descending as $s_{1}\ge s_{2}\ge\cdots\geq0$)
and define 
\[
{\displaystyle \mu\;:=\;\frac{\|\vec{r}\|^{2}}{\|\vec{e}\|^{2}}\;=\;\frac{\|Y'\vec{e}\|^{2}}{n}\;=\;\frac{\mathrm{vol}_{2}(\mathcal{I})}{vol_{0}(\mathcal{U})}=\tilde{d}_{\mathcal{I}}\frac{\mathrm{vol}_{1}(\mathcal{I})}{vol_{0}(\mathcal{U})}.}
\]
Normalize ${\displaystyle {\alpha_{i}:=\frac{c_{i}^{2}}{\sum_{j}c_{j}^{2}}=\frac{c_{i}^{2}}{n}}}$,
so $\sum_{i}\alpha_{i}=1$ and 
\[
\mu=\sum_{i}s_{i}\alpha_{i},\qquad\kappa_{k}=\frac{\sum_{i=1}^{k}s_{i}\alpha_{i}}{\mu}.
\]
Thus, in fact, $\kappa_{k}$\hspace{0pt} depends only on the spectrum
$\{s_{i}\}$ and the scalar $\mu=\|\vec{r}\|^{2}/n$.
\begin{thm*}
\textbf{1LP.} In the above notations, we have the lower and upper
bound estimates: 
\[
\cos^{2}\theta_{k}=\kappa_{k}\geq\left[\begin{array}{cc}
0, & \text{when }\sigma_{n}^{2}\leq\mu\leq\sigma_{k+1}^{2}\\
\frac{\sigma_{1}^{2}}{\mu}\frac{\mu-\sigma_{k+1}^{2}}{\sigma_{1}^{2}-\sigma_{k+1}^{2}}, & \text{when }\sigma_{k+1}^{2}<\mu\leq\sigma_{1}^{2}
\end{array}\right.
\]
\[
\cos^{2}\theta_{k}=\kappa_{k}\leq\left[\begin{array}{cc}
\frac{\sigma_{k}^{2}}{\mu}\frac{\mu-\sigma_{n}^{2}}{\sigma_{k}^{2}-\sigma_{n}^{2}}, & \text{when }\sigma_{n}^{2}\leq\mu\leq\sigma_{k}^{2}\\
1, & \text{when }\sigma_{k}^{2}<\mu\leq\sigma_{1}^{2}
\end{array}\right.
\]
\end{thm*}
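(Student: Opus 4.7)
The plan is to read the statement as the value of a $2$-constraint linear program in the variable $\alpha=(\alpha_i)_{i=1}^n$. The setup immediately preceding the theorem already writes $\kappa_k=\mu^{-1}\sum_{i=1}^k s_i\alpha_i$, with the normalized weights $\alpha_i\ge 0$, $\sum_i\alpha_i=1$, and the second linear constraint $\sum_i s_i\alpha_i=\mu$. I would therefore bound $\kappa_k$ by minimizing and maximizing the linear objective $f(\alpha):=\sum_{i=1}^k s_i\alpha_i$ over this polytope; since the $\alpha$ induced by the true $Y$ is feasible, the LP min and max sandwich the actual $\kappa_k$.

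With exactly two equality constraints, the basic-feasible-solution theorem tells us every vertex of the feasible polytope has support of size at most $2$, so it suffices to enumerate $2$-sparse candidates. If both support indices lie in $\{1,\dots,k\}$, then $f=\mu$ and $\kappa_k=1$; if both lie in $\{k+1,\dots,n\}$, then $f=0$ and $\kappa_k=0$, which is feasible only when $\mu\in[s_n,s_{k+1}]$. The informative case is support $\{i,j\}$ with $i\le k<j$, which solves to $\alpha_i=(\mu-s_j)/(s_i-s_j)$ and $\alpha_j=(s_i-\mu)/(s_i-s_j)$ (feasible iff $s_j\le\mu\le s_i$), with objective
\[
g(s_i,s_j) \;:=\; \frac{s_i(\mu-s_j)}{s_i-s_j}.
\]

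A one-line partial-derivative check gives
\[
\partial_{s_i} g \;=\; \frac{-s_j(\mu-s_j)}{(s_i-s_j)^2},\qquad \partial_{s_j} g \;=\; \frac{s_i(\mu-s_i)}{(s_i-s_j)^2},
\]
both nonpositive on $s_i>\mu>s_j$, so $g$ is jointly nonincreasing in both arguments. For the lower bound on $\kappa_k$ I minimize $g$, which pushes $(s_i,s_j)$ toward the largest admissible values $(s_1,s_{k+1})$, giving $\kappa_k\ge \tfrac{s_1}{\mu}\cdot\tfrac{\mu-s_{k+1}}{s_1-s_{k+1}}$ on the branch $\mu>s_{k+1}$, while on the branch $\mu\in[s_n,s_{k+1}]$ the all-mass-in-$\{k+1,\dots,n\}$ vertex forces the LP minimum down to $0$. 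Dually, for the upper bound I maximize $g$, pushing $(s_i,s_j)$ toward the smallest admissible values $(s_k,s_n)$, giving $\kappa_k\le \tfrac{s_k}{\mu}\cdot\tfrac{\mu-s_n}{s_k-s_n}$ on $\mu<s_k$, while on $\mu\ge s_k$ the all-mass-in-$\{1,\dots,k\}$ vertex attains $\kappa_k=1$. Assembling these branches recovers the two piecewise formulas.

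The main obstacle is book-keeping rather than a conceptual difficulty: I need to (i) check that the signs of the partials really do push the optima onto the extremal index pairs $(1,k+1)$ and $(k,n)$ rather than some interior pair, which uses the feasibility constraint $s_j\le\mu\le s_i$; (ii) handle the degenerate case where singular values coincide, so the constraint matrix is rank-deficient, by a short perturbation or limiting argument; and (iii) verify that the case-split points $\mu=s_{k+1}$ and $\mu=s_k$ are the precise locations where the two-index solution merges continuously with the one-sided solution, so the piecewise formula is correct at the boundary.
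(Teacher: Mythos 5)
Your proposal is correct and follows essentially the same route as the paper's appendix proof: recast $\kappa_k$ as a two-constraint LP in the weights $\alpha_i$, enumerate the at-most-two-support vertices, and use monotonicity of $s_i(\mu-s_j)/(s_i-s_j)$ in both arguments to land on the extremal index pairs $(1,k+1)$ and $(k,n)$. Your explicit partial-derivative check simply makes rigorous the monotonicity claim the paper states without computation, so there is nothing substantive to add.
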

\textbf{Interpretation. }If the ``degree-squared mean'' $\mu$ does
not exceed the $(k+1)$-st spectral level $\sigma_{k+1}^{2}$\hspace{0pt},
there is no information forcing $\vec{r}$ to live in the top-$k$
space (bound 0).\\
 Once $\mu$ crosses $\sigma_{k+1}^{2}$, a nonzero fraction of $\vec{r}$
must lie in the top-$k$ subspace. The bound is stronger when the
\textbf{gap} $\mu-\sigma_{k+1}^{2}$\hspace{0pt} is larger. If $\mu$
approaching $\sigma_{1}^{2}$, the bound approaches $1$: most of
$\vec{r}$ must be captured by the top-$k$ space.
\begin{itemize}
\item These are tight bounds, given only $\{\sigma_{i}\}$ and $\mu$: the
worst case is a two-point mix of the $q_{1}$\hspace{0pt} direction
and the $q_{k+1}$\hspace{0pt} direction. 
\item It ignores head structure; sometimes stronger than B-style bounds
(e.g., big spectral gap, $\mu$ near $\sigma_{1}^{2}$\hspace{0pt});
sometimes weaker (heavy head but small gap). 
\item The bounds is monotonic non-decreasing in $k$ (easy to check with
basic Calculus). Because $\sigma_{k+1}$\hspace{0pt} decreases with
$k$; as we allow a larger top subspace, $\theta_{k}$\hspace{0pt}
must shrink (alignment grows). 
\item For $k=1$ LP lower bound reduces to the ``gap-only'' principal-direction
bound:\\
 $\cos^{2}\theta_{1}\ge s_{1}\frac{(\mu-s_{2})}{[\mu(s_{1}-s_{2})]}$
if $\mu>s_{2}$\hspace{0pt}, else 0.\\
 
\item If $\sigma_{n}=0$ (common), \textbf{LP upper} bound reduces to $\cos\theta_{k}\le1$
for $\mu\le\sigma_{k}^{2}$, which is not informative. 
\item We can state the \textbf{symmetric user-side} version by replacing
$r=Y'e$, $n$ with $s=Ye$, $m$; the spectrum is the same. 
\end{itemize}

\section{Appendix.}

\subsection{A. Kumar's bounds.}

A two-sided estimate is from {[}K{]} by R. Kumar: 
\[
m+\frac{s}{\sqrt{p-1}}\leq\lambda_{1}^{2}\leq m+s\sqrt{p-1},
\]
where $m=\frac{e}{p}$, $s=\sqrt{\frac{M}{2p}-m^{2}}$, $p=\left\lfloor \frac{n+m}{2}\right\rfloor $,
$M=tr(A^{4})$, and, as we know, $\lambda_{1}=\sigma_{1}$. It is
also known that 
\[
tr(A^{4})=2|E_{G}|+4W+8c_{4}(G),
\]
where $W=\sum_{v\in V_{G}}\left(\begin{array}{c}
d_{v}\\
2
\end{array}\right)$ is the number of ``wedges'' in $G$ and $c_{4}(G)$ is the number
of 4-cycles (``butterflies'') in $G$. Lets analyze the terms in
the inequality $m+\frac{s}{\sqrt{p-1}}\leq\lambda_{1}^{2}$ . Assume
that $n+m$ is even, then $m=\frac{e}{p}=\frac{2e}{2\left\lfloor \frac{n+m}{2}\right\rfloor }=\frac{2e}{n+m}=\bar{d}$
is the average degree of a vertex. In $s^{2}=\frac{M}{2p}-m^{2}=\frac{M}{2p}-\bar{d}^{2}$,
we have 
\[
\frac{M}{2p}=\frac{2|E_{G}|}{2p}+\frac{4W}{2p}+\frac{8c_{4}}{2p}=\bar{d}+\frac{4W}{4p}+\frac{8c_{4}}{2p}.
\]
It is well known that 
\[
\sum_{i=1}^{n+m}d_{i}^{2}=\text{"number of open 2-paths"}=2e+2W
\]
and thus 
\[
\frac{2e}{2p}+\frac{4W}{2p}-m^{2}=\frac{2e+2W}{2p}+\frac{2W}{2p}-m^{2}=\left(\frac{1}{2p}\sum_{i=1}^{n+m}d_{i}^{2}-\bar{d}^{2}\right)+\frac{W}{p}=Var(d)+\frac{W}{p}
\]
and so 
\[
s^{2}=Var(d)+\frac{W}{p}+\frac{4c_{4}}{p}
\]
Thus we can say that 
\[
m+\frac{s}{\sqrt{p-1}}\cong\bar{d}+\sqrt{\frac{Var(d)+\frac{W}{p}+\frac{4c_{4}}{p}}{p-1}}.
\]

\subsection{Theorem 1C}

\begin{proof}[Proof of Lemma 1]
\mbox{}

We have for every singular vector $\vec{q}_{j}$\hspace{0pt} 
\[
\vec{q_{j}}=(\vec{q_{j}})_{S}+(\vec{q_{j}})_{S^{c}}\qquad\text{and}\qquad1=\|\vec{q_{j}}\|^{2}=\|(\vec{q_{j}})_{S}\|^{2}+\|(\vec{q_{j}})_{S^{c}}\|^{2}
\]
Since $Y'\vec{p}_{j}=\sigma_{j}\vec{q}_{j}\Rightarrow(B_{S}'\vec{p}_{j})=\sigma_{j}(\vec{q}_{j})_{S^{c}}$\hspace{0pt}.
Hence 
\[
\|(\vec{q}_{j})_{S^{c}}\|_{2}^{2}=\sigma_{j}^{-2}\|B_{S}'\vec{p}_{j}\|_{2}^{2}\le\sigma_{j}^{-2}\lambda_{\max}(B_{S}'B_{S}).
\]
Altogether we get 
\[
\|(\vec{q_{j}})_{S}\|_{2}^{2}\;\ge\;1-\frac{\lambda_{\max}(B_{S}'B_{S})}{\sigma_{j}^{2}}\,.
\]
For binary $Y$ one has the row-sum bound 
\[
\lambda_{\max}(B_{S}'B_{S})\ \le\ \sum_{i\in S^{c}}r_{i}\;=\;\operatorname{vol}(\mathcal{I})-r_{S},
\]
so

\[
\|(\vec{q}_{j})_{S}\|_{2}^{2}\;\ge\;1-\frac{\operatorname{vol}(\mathcal{I})-r_{S}}{\sigma_{j}^{2}}\,.
\]
The $(s,s)$ entry (the $s$-th diagonal element) of $\Pi_{k}$ is
\[
\pi_{s}\;:=\;[\Pi_{k}]_{ss}\;=\;\vec{e}_{s}'\Pi_{k}\vec{e}_{s}\;=(\vec{e}_{s}'Q_{k})(Q_{k}'\vec{e}_{s})=\;\sum_{j=1}^{k}(\vec{e}_{s}'\vec{q}_{j})^{2}\;=\;\sum_{j=1}^{k}q_{js}^{2},
\]
and 
\[
\tau_{S}:=\mathrm{tr}(\Pi_{k})_{SS}=\sum_{s\in S}\pi_{s}.
\]
Then 
\[
\tau_{S}=\sum_{j,s\in S}q_{js}^{2}=\sum_{j=1}^{k}\|(\vec{q}_{j})_{S}\|_{2}^{2}\geq k-(vol(\mathcal{I})-r_{S})\left(\sum_{j=1}^{k}\frac{1}{\sigma_{j}^{2}}\right)=k-\Delta_{S}H_{k},
\]
with $\Delta_{S}\LyXZeroWidthSpace:=vol(I)-r_{S}$\hspace{0pt}, and
$H_{k}:=\sum_{j=1}^{k}\sigma_{j}^{-2}$. 
\end{proof}

\begin{proof}[Proof of Lemma $1'$]
\mbox{}

Since $B_{S}'p_{j}=\sigma_{j}\,(q_{j})_{S^{c}}$, we can write the
norm as 
\[
\|(q_{j})_{S^{c}}\|^{2}=\frac{1}{\sigma_{j}^{2}}\|B_{S}'p_{j}\|^{2}=\frac{1}{\sigma_{j}^{2}}\,(B_{S}'p_{j})'(B_{S}'p_{j})=\frac{1}{\sigma_{j}^{2}}\,p_{j}'(B_{S}B_{S}')p_{j}.
\]
Let $P_{k}^{(u)}=\sum_{j=1}^{k}p_{j}p_{j}'$\hspace{0pt} be the projector
onto $\mathrm{span}\{p_{1},\dots,p_{k}\}$, and $W_{k}=\sum_{j=1}^{k}\frac{1}{\sigma_{j}^{2}}\,p_{j}p_{j}'$
is projection onto the same span with scaling. We have 
\[
P_{k}^{(u)}W_{k}=W_{k}P_{k}^{(u)}=W_{k}.
\]
Now summing $\|(q_{j})_{S^{c}}\|^{2}$ over $j=1,\dots,k$ gives 
\begin{align}
\sum_{j=1}^{k}\|(q_{j})_{S^{c}}\|^{2}=\sum_{j=1}^{k}\frac{1}{\sigma_{j}^{2}}p_{j}'(B_{S}B_{S}')p_{j}=\mathrm{Tr}\left(\sum_{j=1}^{k}\frac{1}{\sigma_{j}^{2}}p_{j}'(B_{S}B_{S}')p_{j}\right)=\\
=\sum_{j=1}^{k}\frac{1}{\sigma_{j}^{2}}\mathrm{Tr}\left(p_{j}'(B_{S}B_{S}')p_{j}\right)=\sum_{j=1}^{k}\frac{1}{\sigma_{j}^{2}}\mathrm{Tr}\left((B_{S}B_{S}')p_{j}p_{j}'\right)=\\
=\mathrm{Tr}\left(B_{S}B_{S}'\left(\sum_{j=1}^{k}\frac{1}{\sigma_{j}^{2}}p_{j}p_{j}'\right)\right)=\mathrm{Tr}\!\Big(B_{S}B_{S}'\,W_{k}\Big)=\\
=\mathrm{Tr}\!\Big(B_{S}B_{S}'\,P_{k}^{(u)}\,W_{k}P_{k}^{(u)}\,\Big)=\mathrm{Tr}\!\Big(P_{k}^{(u)}\,B_{S}B_{S}'\,P_{k}^{(u)}\,W_{k}\Big).
\end{align}
Note $P_{k}^{(u)}B_{S}B_{S}'P_{k}^{(u)}\succeq0$ (in Loewner's order)
because $B_{S}B_{S}'\succeq0$. On $\mathrm{span}\{p_{1},\dots,p_{k}\}$,
$W_{k}$\hspace{0pt} is diagonal with eigenvalues $1/\sigma_{j}^{2}$\hspace{0pt}
(nonincreasing in $j$). Hence 
\[
\frac{1}{\sigma_{1}^{2}}P_{k}^{(u)}\;\preceq\;W_{k}\;\preceq\;\frac{1}{\sigma_{k}^{2}}P_{k}^{(u)}.
\]
Therefore 
\[
\frac{1}{\sigma_{1}^{2}}\mathrm{Tr}\!\Big(P_{k}^{(u)}\,B_{S}B_{S}'\,P_{k}^{(u)}\,P_{k}^{(u)}\Big)\leq\mathrm{Tr}\!\Big(P_{k}^{(u)}\,B_{S}B_{S}'\,P_{k}^{(u)}\,W_{k}\Big)\leq\frac{1}{\sigma_{k}^{2}}\mathrm{Tr}\!\Big(P_{k}^{(u)}\,B_{S}B_{S}'\,P_{k}^{(u)}\,P_{k}^{(u)}\Big).
\]
Using idempotence of projectors and trace properties, this inequality
changes to 
\[
\frac{1}{\sigma_{1}^{2}}\,\mathrm{Tr}\!\big(P_{k}^{(u)}\,B_{S}B_{S}'\big)\leq\sum_{j=1}^{k}\|(q_{j})_{S^{c}}\|^{2}\leq\frac{1}{\sigma_{k}^{2}}\,\mathrm{Tr}\!\big(P_{k}^{(u)}\,B_{S}B_{S}'\big)
\]
and using Ky Fan's principles, 
\[
\frac{1}{\sigma_{1}^{2}}\sum_{j=n-k+1}^{n}\lambda_{j}(B_{S}B_{S}')\;\leq\;\sum_{j=1}^{k}\|(q_{j})_{S^{c}}\|^{2}\leq\;\frac{1}{\sigma_{k}^{2}}\sum_{j=1}^{k}\lambda_{j}(B_{S}B_{S}').
\]
Equivalently, 
\[
k-\frac{1}{\sigma_{1}^{2}}\sum_{j=n-k+1}^{n}\lambda_{j}(B_{S}B_{S}')\;\geq\;\sum_{j=1}^{k}\|(q_{j})_{S}\|^{2}\;\ge\;k-\frac{1}{\sigma_{k}^{2}}\sum_{j=1}^{k}\lambda_{j}(B_{S}'B_{S})\ \tag{KF–S}.
\]
\end{proof}

\begin{proof}[Proof of Theorem 1C]
\mbox{}

\textbf{Step 1. Spectral bounds on the principal submatrix.} 
Consider
the principal submatrix $M_{S}:=(\Pi_{k})_{SS}\in\mathbb{R}^{|S|\times|S|}$
(cuts out $S$-s rows and columns). Let $E_{S}:\mathbb{R}^{|S|}\to\mathbb{R}^{m}$
be the column-selector (embeds vectors supported on $S$). Then 
\[
M_{S}=E_{S}'\Pi_{k}E_{S}=E_{S}'Q_{k}Q_{k}'E_{S}=(Q_{k}'E_{S})'(Q_{k}'E_{S}),
\]
so it is positive semi-definite. Also $\|\Pi_{k}\|_{2}=1$ (it's a
projector), hence 
\[
(M_{S}x,x)=(\Pi_{k}E_{S}x,E_{S}x)\leq\|E_{S}x\|_{2}^{2}=\|x\|_{2}^{2},
\]
so $\|M_{S}\|_{2}\leq1$. Therefore all eigenvalues of $M_{S}$\hspace{0pt}
lie in $[0,1]$. Note that - $\mathrm{tr}(M_{S})=\tau_{S}=\sum_{s\in S}[\Pi_{k}]_{ss}=\sum_{j=1}^{k}\|(q_{j})_{S}\|^{2}$
- $\mathrm{rank}(M_{S})\le\mathrm{rank}(\Pi_{k})=k$. - In particular,
if $|S|>k$, then $M_{S}$\hspace{0pt} has at least $|S|-k$ zero
eigenvalues. Let the eigenvalues of $M_{S}$\hspace{0pt} be $\lambda_{1}(M_{S})\ge\cdots\ge\lambda_{|S|}(M_{S})\ge0$,
each $\le1$, with sum $\sum_{i=1}^{|S|}\lambda_{i}(M_{S})=\tau_{S}$\hspace{0pt}.
Then 
\begin{align}
\lambda_{min}(M_{S})=\lambda_{|S|}(M_{S})= & \tau_{S}-\sum_{i=1}^{|S|-1}\lambda_{i}(M_{S})\geq\max\left(0,L_{\tau_{S}}-\sum_{i=1}^{|S|-1}\lambda_{i}(M_{S})\right)\\
\geq & \max\left(0,L_{\tau_{S}}-(|S|-1)\right)\equiv[L_{\tau_{S}}-(|S|-1)]_{+}.
\end{align}
\begin{align}
\lambda_{max}(M_{S})=\lambda_{1}(M_{S})=\tau_{S}-\sum_{i=2}^{|S|}\lambda_{i}(M_{S})\\
\leq\min\left(1,U_{\tau_{S}}-\sum_{i=2}^{|S|}\lambda_{i}(M_{S})\right)\leq\min(1,U_{\tau_{S}})
\end{align}
When $|S|>k$, due to the previous remark $\lambda_{\min}(M_{S})=0$,
consistent with $0=[\,\tau_{S}-(|S|-1)\,]_{+}$ because $\tau_{S}\le k\le|S|-1$.

\textbf{Step 2. Block lower bound for $r'\Pi_{k}r$.}

Let $(\vec{r}_{S},\vec{r}_{S^{c}})$ denote the rearrangement of coordinates
of vector $\vec{r}$ such that the coordinates from $S$ come first,
and let $P$ be the permutation matrix such that $P\vec{r}=(\vec{r}_{S},\vec{r}_{S^{c}})$.
With $M_{S}=(\Pi_{k})_{SS}$, $C=(\Pi_{k})_{SS^{c}}$, $N=(\Pi_{k})_{S^{c}S^{c}}$,
the projection $\Pi_{k}$ in this permuted basis has the form 
\[
\Pi_{k'}=P\Pi_{k}P'=\begin{bmatrix}M_{S} & C\\
C' & N
\end{bmatrix},\qquad Pr=\begin{bmatrix}\vec{r}_{S}\\
\vec{r}_{S^{c}}
\end{bmatrix}
\]
and 
\[
\vec{r}'\Pi_{k}\vec{r}=(P\vec{r})'\Pi_{k'}(P\vec{r})=\vec{r}_{S}'M_{S}\vec{r}_{S}+2\,\vec{r}_{S}'C\vec{r}_{S^{c}}+\vec{r}_{S^{c}}'N\vec{r}_{S^{c}},
\]
since everything is real and 
\[
\vec{r}_{S^{c}}'C'\vec{r}_{S}=(C'\vec{r}_{S},\vec{r}_{S^{c}})=(\vec{r}_{S},C\vec{r}_{S^{c}})=\vec{r}_{S}'C\vec{r}_{S^{c}}
\]
 is the same scalar. Using $M_{S}\succeq\lambda_{\min}(M_{S})I$,
$N\succeq0$, and $\|C\|_{2},\|N\|_{2}\leq1$ (since $\|\Pi_{k}\|_{2}=1$),
we get that 
\[
\begin{array}{ccccc}
\lambda_{\min}(M_{S})\|\vec{r}_{S}\|^{2} & \leq & \vec{r}_{S}'M_{S}\vec{r}_{S} & \leq & \lambda_{\max}(M_{S})\|\vec{r}_{S}\|^{2}\\
-\,2\,\|\vec{r}_{S}\|\,\|\vec{r}_{S^{c}}\| & \leq & 2\,\vec{r}_{S}'C\vec{r}_{S^{c}}\; & \leq\; & \,2\,\|\vec{r}_{S}\|\,\|\vec{r}_{S^{c}}\|\\
0 & \leq & \vec{r}_{S^{c}}'N\vec{r}_{S^{c}} & \leq & \|\vec{r}_{S^{c}}\|^{2}
\end{array}
\]

and 
\[
\begin{array}{cc}
\vec{r}'\Pi_{k}\vec{r}\;\geq\; & \lambda_{\min}(M_{S})\,\|\vec{r}_{S}\|^{2}\;-\;2\,\|\vec{r}_{S}\|\,\|\vec{r}_{S^{c}}\|\\
\vec{r}'\Pi_{k}\vec{r}\;\leq\; & \lambda_{\max}(M_{S})\,\|\vec{r}_{S}\|^{2}\;+\;2\,\|\vec{r}_{S}\|\,\|\vec{r}_{S^{c}}\|\;+\;\|\vec{r}_{S^{c}}\|^{2}
\end{array}
\]

Combining with (†) and the bound on $\tau_{S}$: 
\[
\quad\vec{r}'\Pi_{k}\vec{r}\;\geq\;\Big(\,\Big[L_{\tau_{S}}-(|S|-1)\Big]_{+}\Big)\,\|\vec{r}_{S}\|^{2}\;-\;2\,\|\vec{r}_{S}\|\,\|\vec{r}_{S^{c}}\|\quad
\]
\[
\quad\vec{r}'\Pi_{k}\vec{r}\;\leq\;\Big(\,\min[1,U_{\tau_{S}}]\Big)\,\|\vec{r}_{S}\|^{2}\;+\;2\,\|\vec{r}_{S}\|\,\|\vec{r}_{S^{c}}\|\;+\;\|\vec{r}_{S^{c}}\|^{2}\quad
\]
\end{proof}

\subsection{LP theorem.}

\hspace*{\fill}

Finding a lower and an upper bounds for $\kappa_{k}$\hspace{0pt}
turns into two linear programming problems on $\alpha_{i}$: 
\[
\begin{cases}
N(\alpha)=\sum_{i\le k}s_{i}\alpha_{i}\to\min\\
N(\alpha)=\sum_{i\le k}s_{i}\alpha_{i}\to\max
\end{cases}\quad\text{s.t.}\quad\sum_{i=1}^{n}\alpha_{i}=1,\;\sum_{i=1}^{n}s_{i}\alpha_{i}=\mu,\;\alpha_{i}\ge0,\;s_{1}\ge s_{2}\ge\cdots\geq0.\tag{LP}
\]

The feasible region is a polytope - the intersection of the simplex
$\{\alpha\ge0,\ \sum\alpha_{i}=1\}$ with the hyperplane $\sum s_{i}\alpha_{i}=\mu$.
It's nonempty and compact. Lets also note that, as a convex combination
of (ordered values) $s_{i}$, $\mu\in[s_{n},s_{1}]$. 

Linear programs attain their optima at extreme points (a.k.a. vertices)
of the feasible polytope. At a vertex, at least $m-2$ of the $\alpha_{i}$\hspace{0pt}'s
must be active ($\alpha_{i}=0$). Therefore at most 2 entries can
be positive. (If $\mu$ happens to equal some $s_{j}$\hspace{0pt},
there is just 1 positive entry: $\alpha_{j}=1$) 

Define the index sets $A=\{1,\dots,k\}$ and $B=\{k+1,\dots,n\}$.
Since there are at most two positive~$\alpha_{i}$\hspace{0pt}'s,
there are only three cases: 

Both indices in\textbf{~$A$}: Then~$N(\alpha)=\mu$.

Both indices in\textbf{~$B$}: Then~$N(\alpha)=0$, which is ideal.
This is possible if $\mu$~lies between two values in~$B$. Since~$B=\{k+1,\ldots,n\}$,
the largest value for $\mu$ to take~is~$s_{k+1}\LyXZeroWidthSpace$.
So 
\[
\text{if }s_{k+1}\LyXZeroWidthSpace\geq\mu,\text{ then we can achieve  }N(\alpha)=0.
\]

One index in~$A$, one in\textbf{~$B$}: This is the most interesting
case. Suppose~$i\in A$,~$j\in B$, and 
\[
\alpha_{i}+\alpha_{j}=1,\quad s_{i}\alpha_{i}+s_{j}\alpha_{j}=\mu.
\]

Solving this system gives: 
\[
\alpha_{i}=\frac{\mu-s_{j}}{s_{i}-s_{j}},\quad\alpha_{j}=\frac{s_{i}-\mu}{s_{i}-s_{j}}.
\]

For these to be nonnegative, we require $s_{j}\le\mu\le s_{i}$. The
objective becomes: 
\[
N(\alpha)=s_{i}\alpha_{i}=s_{i}\frac{\mu-s_{j}}{s_{i}-s_{j}}.
\]

This is a decreasing function both in~$s_{i}$\hspace{0pt}~and
in~$s_{j}$. 

To minimize~$N(\alpha)$, we want to choose: 
\begin{itemize}
\item The largest possible~$s_{i}$~from~$A$:~$s_{1}$\hspace{0pt}
\item The largest possible~$s_{j}$~from~$B$:~$s_{k+1}$.\hspace{0pt}
\end{itemize}
Then the solution is: 
\[
\alpha_{1}=\frac{\mu-s_{k+1}}{s_{1}-s_{k+1}},\quad\alpha_{k+1}=\frac{s_{1}-\mu}{s_{1}-s_{k+1}},
\]
feasible when $s_{1}\geq\mu\geq s_{k+1}$, with 
\[
N(\alpha)=s_{1}\frac{\mu-s_{k+1}}{s_{1}-s_{k+1}}.
\]
To maximize~$N(\alpha)$, we want to choose: 
\begin{itemize}
\item The smallest possible~$s_{i}$~from~$A$:~$s_{k}$\hspace{0pt}
\item The smallest possible~$s_{j}$~from~$B$:~$s_{n}$.\hspace{0pt}
\end{itemize}
Then the solution is: 
\[
\alpha_{k}=\frac{\mu-s_{n}}{s_{k}-s_{n}},\quad\alpha_{n}=\frac{s_{k}-\mu}{s_{k}-s_{n}},
\]
feasible when $s_{k}\geq\mu\geq s_{n}$, with 
\[
N(\alpha)=s_{k}\frac{\mu-s_{n}}{s_{k}-s_{n}}.
\]
All together: 
\begin{itemize}
\item when $s_{n}\leq\mu\leq s_{k+1}$ we have vertices with $N(\alpha)=0$
\item when $s_{k+1}<\mu\leq s_{1}$ we are choosing between (not minimal)
$\mu$ and $s_{1}\frac{\mu-s_{k+1}}{s_{1}-s_{k+1}}.$
\end{itemize}
So the final answer to \textbf{(LP)} is 
\[
N(\alpha)\geq\left[\begin{array}{cc}
0, & \text{when }s_{n}\leq\mu\leq s_{k+1}\\
s_{1}\frac{\mu-s_{k+1}}{s_{1}-s_{k+1}}, & \text{when }s_{k+1}<\mu\leq s_{1}
\end{array}\right.\qquad\text{and }\qquad N(\alpha)\leq\left[\begin{array}{cc}
s_{k}\frac{\mu-s_{n}}{s_{k}-s_{n}}, & \text{when }s_{n}<\mu\leq s_{k}\\
\mu, & \text{when }s_{k}\leq\mu\leq s_{1}
\end{array}\right.
\]

And so 
\[
\cos^{2}\theta_{k}=\kappa_{k}\geq\left[\begin{array}{cc}
0, & \text{if }\sigma_{n}^{2}\leq\mu\leq\sigma_{k+1}^{2}\\
\frac{\sigma_{1}^{2}}{\mu}\frac{\mu-\sigma_{k+1}^{2}}{\sigma_{1}^{2}-\sigma_{k+1}^{2}}, & \text{if }\sigma_{k+1}^{2}<\mu\leq\sigma_{1}^{2}
\end{array}\right.
\]
\[
\cos^{2}\theta_{k}=\kappa_{k}\leq\left[\begin{array}{cc}
\frac{\sigma_{k}^{2}}{\mu}\frac{\mu-\sigma_{n}^{2}}{\sigma_{k}^{2}-\sigma_{n}^{2}}, & \text{when }\sigma_{n}^{2}\leq\mu\leq\sigma_{k}^{2}\\
1, & \text{when }\sigma_{k}^{2}<\mu\leq\sigma_{1}^{2}
\end{array}\right.
\]

\end{document}